\def\BibTeX{{\rm B\kern-.05em{\sc i\kern-.025em b}\kern-.08em
		T\kern-.1667em\lower.7ex\hbox{E}\kern-.125emX}}
\begin{document}
	\bibstyle{IEEEtran}
	\title{Mobile MIMO Channel Prediction with ODE-RNN: a Physics-Inspired Adaptive Approach}
	
	
	\author{
		\IEEEauthorblockN{Zhuoran~Xiao,
			Zhaoyang~Zhang$\IEEEauthorrefmark{2}$,
			Zirui~Chen,
			Zhaohui~Yang,
			and Richeng~Jin
}
		
		\IEEEauthorblockA{
			College of Information Science and Electronic Engineering, Zhejiang University, Hangzhou, China\\
			Zhejiang Provincial Key Laboratory of Info. Proc., Commun. \& Netw. (IPCAN), Hangzhou, China\\
			E-mails: \{zhuoranxiao, ning\_ming$\IEEEauthorrefmark{2}$,
			zirui\_chen, richengjin\}@zju.edu.cn, zhaohui.yang@ucl.ac.uk
		}
	}
\maketitle

\begin{abstract}
Obtaining accurate channel state information (CSI) is crucial and challenging for multiple-input multiple-output (MIMO) wireless communication systems. Conventional channel estimation method cannot guarantee the accuracy of mobile CSI while requires high signaling overhead. Through exploring the intrinsic correlation among a set of historical CSI instances randomly obtained in a certain communication environment, channel prediction can significantly increase CSI accuracy and save signaling overhead. In this paper, we propose a novel channel prediction method based on ordinary differential equation (ODE)-recurrent neural network (RNN) for accurate and flexible mobile MIMO channel prediction. Differing from existing works using sequential network structures for exploring the numerical correlation between observed data, our proposed method tries to represent the implicit physics process of path responses changing by specially designed continuous learning network with ODE structure. Due to the targeted design of learning network, our proposed method fits the mathematics feature of CSI data better and enjoy higher network interpretability. Experimental results show that the proposed learning approach outperforms existing methods, especially for long time interval of the CSI sequence and large channel measurement error.

\end{abstract}

\begin{IEEEkeywords}
	MIMO channel prediction, machine learning, mobile channel, physics process, ODE-RNN.
\end{IEEEkeywords}

\section{Introduction} \label{intro}

Obtaining accurate channel state information (CSI) is of crutial importance to wireless communication systems. With the help of CSI, the base station (BS) can adaptively perform resource scheduling, such as adjusting the modulation order, transmission power and precoding codeword to achieve performance gain \cite{9427230}. In general, CSI is obtained at the receiver by channel estimation algorithm, and then feedback to the BS. There exist at least two potential drawbacks for this approach. On one hand, if the channel parameters to be estimated are strictly limited, the accuracy of the estimated channel cannot be guaranteed. Meanwhile, if the number of parameters to be estimated increases, the signaling overhead and communication delay undoubtly increase \cite{8395053}. 
On the other hand, there exists nonnegligible transmission delay for the BS to receive feedback CSI data. 
Besides, the instantaneous channel has changed when the BS received the feedback CSI data, which will inevitably cause the loss of accuracy.

In practical scenarios, the BS often serves a fixed area. Also, the historical channel and the instantaneous channel to be predicted are both affected by the same environment scatterers. Therefore, the instantaneous channel has strong temporal and spatial correlation with the historic channel obtained in the past period of time. This correlation motivates us to make channel prediction by making full use of the channel instances obtained in the past. There are at least two benefits to adopt channel prediction. On one hand, combining channel estimation with the predicted channel, more accurate instantaneous CSI can be obtained. On the other hand, signaling overhead and processing delay can be greatly reduced since fewer pilots are needed.

Through statistically modeling a wireless channel as a set of radio propagation parameters, two conventional prediction approaches, namely parametric model \cite{6945858} and auto-regressive model \cite{2000Long} have been proposed. Due to the gap between the conventional prediction model and real wireless channel, the statistically modeling-based prediction is generally inaccurate and infeasible in practical systems. Some recent works applied machine learning algorithms to predict the current and following CSI based on a series of past CSI sequence. In \cite{2002Recurrent,2014Fading}, recurrent neural network (RNN) is proposed to build a predictor for narrow-band single-antenna channels. Besides, RNN is further replaced by a long short-term memory (LSTM) and gated recurrent unit (GRU) in \cite{2020Recurrent}. The authors in \cite{8904286} further adopt the sequence to sequence structure and use a generation model to predict the channel. 

Although the existing learning based methods for channel prediction can achieve better prediction accuracy than traditional methods, the prediction accuracy and other performance metrics are still insufficient for practical application. The accuracy of time-sequence prediction by a purely data-driven learning network is highly relied on the numerical correlation and smoothness of the sequence data. Thus, the prediction accuracy of those methods will decrease significantly when the interval of sequential data increases. Besides, these methods require strict equal interval of the obtained CSI sequence. However, some observed results may have large errors due to poor channel state in practical, which will greatly affect the system performance. The reason behind is that the network structures adopted in existing works mainly focus on the data correlation between discrete observed data while ignoring the fact that the practical channel changing is a continuous process with unique physical properties which is need to be fully fully considered in the learning network design.

To solve the problems mentioned above, we introduce the neural ordinary differential equation (Neural ODE) to implicitly represent the physical processes of channel changing rather than simply exploring the numerical correlations of discrete observed data. Furthermore, ODE-RNN is adopted to replace the original Neural ODE, overcoming the structure drawback that the whole CSI sequence data cannot be fully utilized. Moreover, the computational cost of this learning structure is quite low which means low calculation delay, making it quite suitable for channel prediction in practical.

The remainder of this paper is organized as follows. The system  model is described in section \ref{system}. The motivation for our network design is given in section \ref{net}. Section \ref{scene} introduces our experiment scene setup. Numerical results which evaluate our proposed approach compared with existing methods from different perspective are provided in section \ref{performance}. Section \ref{conclusion} draws our main conclusions.

\section{System Model} \label{system}
\subsection{Channel Model}
We consider that the BS is equipped with a multi antenna array with half wavelength spacing between two adjacent antennas and adopting orthogonal frequency-division multiplexing (OFDM) modulation. User equipment (UE) has a single omni-directional antenna.
The BS has ${N_t}$ antennas and there are ${N_c}$ subcarriers for OFDM signals. 
The channel frequency response (CFR) for each subcarrier can be formulated as
\begin{equation}\label{Sye_CFR_eq1}
{\bf{h}}[l] = \sum\limits_{p = 1}^K {{\alpha _p}{\bf{e}}({\theta _p})} {e^{ - j2\pi [{{{d_p} + {v_u}\cos ({\theta _v} - {\theta _p}){\tau _p}} \over {{\lambda _l}}}]}},
\end{equation}
where $l$ denotes the subcarrier index, $K$ is the total number of propagation paths, ${{\alpha _p}}$ is the path loss of $p$th path, ${{\theta_p}}$ is the angle of arrival, ${{d_p}}$ is the length of propagation path, ${{v_u}}$ denotes the velocity of user, ${{\theta _v}}$ is the direction angle of the velocity vector, ${{\tau _p}}$ is the propagation delay and ${{\lambda _l}}$ is the wavelength of corresponding subcarrier. In equation \eqref{Sye_CFR_eq1}, ${\bf e}(\theta)$ denotes the array response vector of the ULA given by
\begin{equation}
	{\bf{e}}(\theta ) = {[1,{e^{ - j2\pi {{d\cos (\theta )} \over \lambda }}},...,{e^{ - j2\pi {{({N_t} - 1)d\cos (\theta )} \over \lambda }}}]^T},
\end{equation}
where $d$ is the gap between two adjacent antennas and $\lambda $ is the wavelength. Thus, the overall CFR matrix of the channel between the BS and the user can be expressed as 
\begin{equation}
{\bf{H}} = [{\bf{h}}[1],{\bf{h}}[2],...,{\bf{h}}[{N_c}]].
\end{equation}
Note that the matrix ${\bf{H}}$ is referred as CSI matrix in the literature.

\subsection{Problem Formulation}
The goal of channel prediction is to forecast the CSI at the current and following time by taking full use of CSI sequence information obtained in the previous time period. In existing works \cite{2020Recurrent,8904286}, the CSI data needs to be uniformly sampled. 
Assume that the BS stores CSI estimated in the past $n$ time slots, which can be denoted by $\{ {\bf{H}}[1],{\bf{H}}[2],...,{\bf{H}}[n]\} $. The CSI in the next time slot should be predicted, denoted by $\mathop {\bf{H}}\limits^ \wedge  [n + 1]$. 
Thus, the CSI prediction problem can be presented as 
\begin{equation}
	\{ {\bf{H}}[1],{\bf{H}}[2],...,{\bf{H}}[n]\}  \to \mathop {\bf{H}}\limits^ \wedge  [n + 1].
\end{equation}

As one advantage of our proposed method, the CSI sequence is no longer required to be evenly sampled. Thus, the process can be written as
\begin{equation}
\{ {\bf{H}}[{t_1}],{\bf{H}}[{t_2}],...,{\bf{H}}[{t_n}]\}  \to \mathop {\bf{H}}\limits^ \wedge  [{t_{n + 1}}],
\end{equation}
where $t_x$ is a certain time point and the time sequence $t_1, t_2, \cdots, t_n, t_{n+1}$ can be chosen as any arbitrary sequence.

\section{ODE-RNN for Mobile Channel Prediction: Motivation and Learning Structure} \label{net}

\subsection{Overview of Neural ODE and ODE-RNN}
Due to its unique structural design, Neural ODE has been proved to have a strong ability to represent and predict the time series driven by physics processes. In a typical Neural ODE, the state of hidden layer is defined by the solution of the following equation,
\begin{equation}
{{dx(t)} \over {d t}} = f({\bf x}(t),{\bf{I}}(t),t,{\boldsymbol\theta} ),
\end{equation}
where ${\bf x}(t) \in {\mathbb{R}^D}$ denotes the hidden layer, $D$ is the dimension of variables in the hidden layer, ${\bf{I}}(t)$ is the input, $t$ denotes time, $f( \cdot )$ denotes a neural network with parameter $\boldsymbol\theta$. Therefore, after the network completes training, the forward calculation becomes an ordinary differential equation problem with known initial values. Thus, the forward calculation can be solved by numerical methods which is called ODE Solver. The process of  ODE Solver can be written as 
\begin{equation}
h({t_0}) = {h_0},
\end{equation}
\begin{equation}
{h_0},...,{h_N} = \text{ODE Solver}(f_{\boldsymbol\theta},{h_0},({t_0},...,{t_N})).
\end{equation}
Among all the numerical methods, Euler method is one of the most commonly used ODE Solvers. The procedure of Euler method can be presented by 
\begin{equation}
{h_{t + \Delta t}} = {h_t} + \Delta t \times f({h_t},{\theta _t}),
\end{equation}
where $\Delta t$ is the step length used to adjust the accuracy and computation cost. In addition to the Euler method, there are also some high-order solvers with adaptive step size. Choosing a proper ODE Solver is determined by the trade-off between computational cost and accuracy. 

In order to ensure high calculation accuracy, the step size of ODE Solver is actually set as a  small value, which indicates that directly using the gradient back propagation algorithm to calculate the loss function will introduce a large calculation cost. To solve this problem, the adjoint method is proposed in \cite{2018Optimization} to transform the gradient calculation into an ODE problem, which can be solved by the ODE Solver with low computational cost.
To handle the calculation error introduced in the adjoint method in \cite{2018Optimization}, 
the adaptive checkpoint adjoint method is further proposed in \cite{2020Adaptive} with adding a small amount of storage cost. 

For a continuous system defined by Neural ODE, after all the network parameters are fixed through training, the change of hidden state over time is only determined by the initial input of the network. However, the available data is actually the CSI sequence obtained at a previous time period. The forward calculation of Neural ODE can only take use of one data as the initial value, while other observations cannot participate in the process. Obviously, this highly limits the ability for fully digging the correlation between sequence data. Thus, ODE-RNN is introduced to solve this problem. Moreover, the introduced RNN structure shortens the integration range required for ODE Solver, helping reduce the error accumulation effect caused by the numerical solver.

\begin{figure}[t!]
	\centering
	\includegraphics[width=0.3\textwidth]{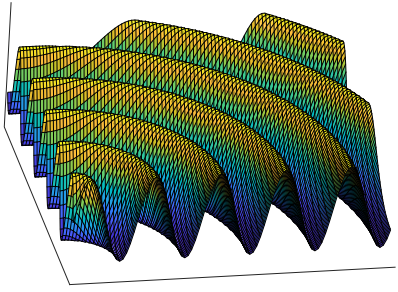}
		\vspace{-.5em}
	\caption{The value of path response (taking the real part as an example) in space, which shows highly nonsmooth spatially.}
		\vspace{-.5em}
	\label{path_response}
\end{figure}

\begin{figure}[htb!]
	\centering
		\vspace{-1em}
	\includegraphics[width=0.35\textwidth]{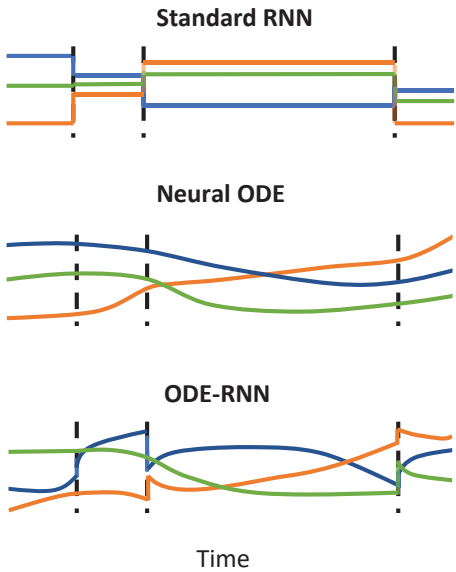}
		\vspace{-.5em}
	\caption{Hidden state trajectories. Standard RNNs have constant or undefined hidden states between observations. States of Neural ODE follow a complex trajectory but are determined by the initial state. The ODE-RNN model has states which obey an ODE between observations, and also updated at observations.}
		\vspace{-.5em}
	\label{ODE}
\end{figure}

\subsection{Why Conventional Network Dose NOT Work Well}
As shown in equation \eqref{Sye_CFR_eq1}, CSI matrix is determined by multipath response components interwinding. For a wireless channel, the variation mainly comes from two parts, i.e., the change of electromagnetic wave propagation paths caused by the changing of spatial position and frequency selective fading caused by Doppler effect. As shown in Fig. \ref{path_response}, due to the obvious magnitude difference of electromagnetic wavelength scale relative to the spatial changing scale of mobile users, the phase changing of channel response is very fast. Thus, the path response shows obvious non-smooth characteristics on the time axis and spatial coordinate axis. Combining the interwinding of multiple path responses and the high-dimensional characteristics of channel matrix caused by multi carriers and multi antennas, from the perspective of data characteristics, the channel prediction problem is essentially a prediction problem dealing with high-dimensional with extremely nonsmooth data.

As shown in Fig. \ref{ODE}, the hidden state of the standard RNN remains unchanged between observation points, so it is difficult to represent a complex changing process between observation points. Thus, the prediction accuracy of RNN greatly depends on the correlation between observation points and data smoothness. In other words, the network structure of RNN is difficult to deal with high-dimensional nonsmooth data such as CSI matrix. Reflected in the experiments, the prediction accuracy of RNN is particularly sensitive to the spatial or temporal interval of sequential sampling channels. When the time interval of acquisition channels is large or the spatial position between the two sampling channels is far, the prediction performance will be significantly degraded as shown in \cite{9569281} and our experiments in Section IV.

\subsection{Neural ODE for Representing the Physics Process of Channel Changing}

\begin{figure}[htb]
	\centering
	\includegraphics[width=0.5\textwidth]{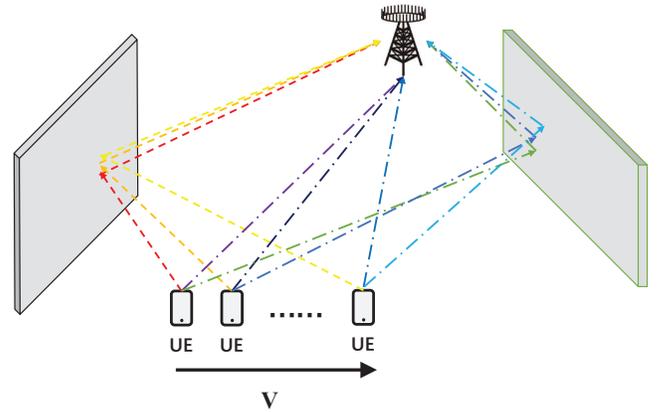}
		\vspace{-.5em}
	\caption{The changing process of propagation paths while the user is moving.}
		\vspace{-.5em}
	\label{path}
\end{figure}

Although CSI shows high-dimensional non-smoothness in data features, from the perspective of physics process of electromagnetic wave propagation, the changing of channel is a purely physics driving process. As shown in Fig. \ref{path}, in a real scenario, the BS serves a fixed area where there exist different static scatterers. Electromagnetic waves propagate from the transmitter to the receiver through line of sight, reflection, diffraction and other types of propagation processes. For each path, its path response can be decomposed into path attenuation and phase change. The path attenuation is related to parameters such as propagation distance and reflection coefficient, while the phase change is only related to the propagation distance. Considering that for a practical communication system, the time interval for acquiring CSI is the same as that of the coherent time slot, usually in the order of milliseconds, so the user can be regarded as performing uniform linear motion, i.e., the speed is constant during the considered coherent time slot.
Based on the above analysis, we provide the following theorem to show theoretical basis of using ODE-RNN for CSI prediction. 
\begin{figure*}
	\centering
	\includegraphics[width=1\textwidth]{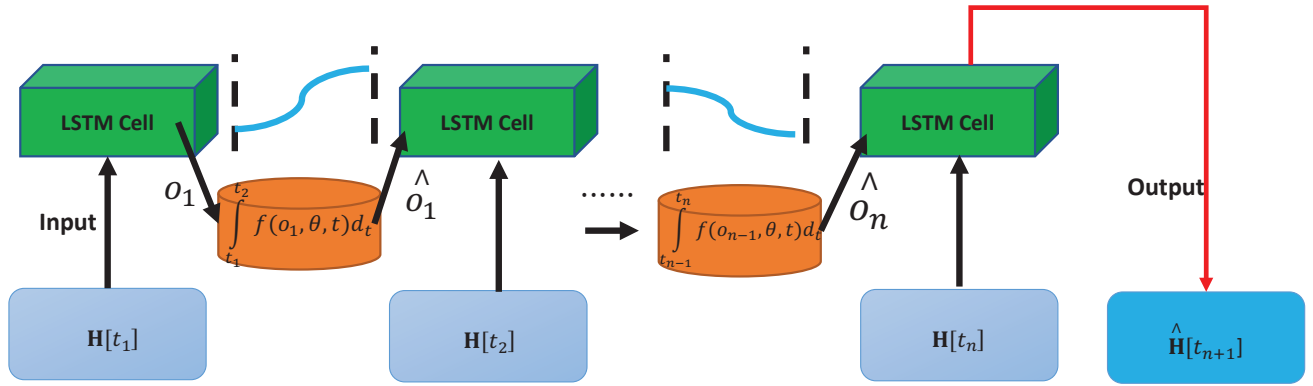}
		\vspace{-.5em}
	\caption{The learning structure of ODE-RNN.}
		\vspace{-.5em}
	\label{RNN}
\end{figure*}

\newtheorem{lemma}{\textbf{Theorem}}
\begin{lemma}
For a quasi-static scattering environment, the derivative of the mobile channel with respect to time is only related to the current CSI. 
\end{lemma}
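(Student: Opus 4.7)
The plan is to derive, directly from the multipath representation \eqref{Sye_CFR_eq1}, a closed-form expression for $d\mathbf{H}/dt$ whose right-hand side depends only on quantities available at the current instant, and then argue that those quantities are themselves determined by the current CSI $\mathbf{H}(t)$.

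First I would isolate the time-varying ingredients. A quasi-static scattering environment freezes the scene geometry, so for every path $p$ the parameters $\alpha_p$, $\theta_p$, and $\tau_p$ in \eqref{Sye_CFR_eq1} are constants on the coherence time scale; the discussion immediately preceding the theorem further pins the mobility parameters $v_u$ and $\theta_v$ to constants over the same interval. The only remaining time dependence therefore enters through the instantaneous path length, which under uniform linear motion satisfies $d_p(t)=d_p(0)+v_u\cos(\theta_v-\theta_p)\,t$.

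Next I would differentiate the summand of \eqref{Sye_CFR_eq1} term by term. Writing the $p$-th per-path contribution at subcarrier $l$ as $\mathbf{h}_p^{(l)}(t)=\alpha_p\mathbf{e}(\theta_p)e^{-j2\pi d_p(t)/\lambda_l}$, a direct calculation gives $d\mathbf{h}_p^{(l)}/dt=j\omega_p^{(l)}\mathbf{h}_p^{(l)}(t)$ with Doppler rate $\omega_p^{(l)}=-2\pi v_u\cos(\theta_v-\theta_p)/\lambda_l$, which is a time-invariant function of the frozen scene and motion. Summing over paths and subcarriers exhibits $d\mathbf{H}/dt$ as a fixed linear combination of the instantaneous per-path components, with no explicit $t$-dependence and no memory of past snapshots. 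To close the statement in terms of $\mathbf{H}(t)$ alone, I would invoke identifiability: in the MIMO-OFDM regime with $N_tN_c\gg K$, the joint angle-delay signatures of distinct paths are generically linearly independent across the antenna-subcarrier grid, so the per-path components are uniquely recoverable from the single snapshot $\mathbf{H}(t)$ by standard two-dimensional harmonic-retrieval arguments. Composing this recovery map with the linear rate map above yields a map $F$ satisfying $d\mathbf{H}/dt=F(\mathbf{H}(t))$, which is the autonomous dynamics the theorem asserts.

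The main obstacle will be the identifiability step: in degenerate configurations where two paths coincide in angle and delay, or when the resolvable antenna-subcarrier grid cannot separate them, the per-path decomposition is not uniquely determined by $\mathbf{H}(t)$ alone. A cleaner fallback I would adopt if a fully rigorous identifiability proof proves unwieldy is to lift the state, defining a finite-dimensional hidden vector $\mathbf{x}(t)$ that stacks the per-path components and observing that $\mathbf{H}(t)$ is a deterministic projection of $\mathbf{x}(t)$, while $\mathbf{x}(t)$ itself satisfies an autonomous ODE $d\mathbf{x}/dt=\tilde F(\mathbf{x}(t))$. This lifted autonomous form is precisely the hidden-state dynamics that the ODE-RNN architecture introduced in the next subsection is designed to learn, so it suffices for the motivational purpose of the theorem.
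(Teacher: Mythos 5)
Your argument is essentially correct and its differentiation step mirrors the paper's second proof: both differentiate the multipath sum \eqref{Sye_CFR_eq1} path by path, observing that under the quasi-static/uniform-motion assumption the only time-varying quantity is $d_p(t)$, so that $d\mathbf{H}/dt$ is a fixed linear map applied to the instantaneous per-path components. (A minor modeling difference: the paper additionally lets $\alpha_p=\xi_p/d_p$ vary with $d_p$, which contributes an extra amplitude-decay term $-\tfrac{1}{d_p}\mathbf{h}_p^{real}[l]$ alongside the Doppler rotation; you freeze $\alpha_p$ over the coherence interval, which is defensible but slightly weaker.) Where you genuinely diverge is in closing the loop from the per-path state back to $\mathbf{H}(t)$. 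The paper's route (its Proof 1, reused in Proof 2) is to posit that the position-to-CSI map $g:\{x_u\}\to\{\mathbf{H}_u\}$ is invertible with high probability, citing the fingerprint-localization literature, so the user's position and hence $\partial d_p/\partial t$ are functions of the current CSI. You instead invoke parametric identifiability of the angle-delay decomposition from a single snapshot via two-dimensional harmonic retrieval when $N_tN_c\gg K$. Both arguments carry the same ``generic configuration'' caveat, but yours is more self-contained and makes the failure modes (coincident angle-delay signatures) explicit, whereas the paper's is shorter and outsources the hard part to an empirical claim about localization. Be aware that your fallback of lifting to a per-path hidden state $\mathbf{x}(t)$ with $d\mathbf{x}/dt=\tilde F(\mathbf{x})$ and $\mathbf{H}=P\mathbf{x}$ does not prove the theorem as literally stated --- it establishes autonomy of a hidden dynamics rather than $d\mathbf{H}/dt=F(\mathbf{H}(t))$ --- though, as you note, it suffices for motivating the ODE-RNN architecture, which in fact learns exactly such a lifted hidden state.
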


\begin{proof}[\textbf{Proof 1:}]
As described above, the channel response of the user $u$ in position ${x_u}$ relative to the BS is composed of the multi-path responses. For propagation path $p$, the response is determined by the path delay, angle of arrival, reflection coefficient and other parameters. The above parameters are functions of geographical environment, electromagnetic characteristics of materials and position. Except for the user's location, other parameters are static parameters related to the scattering environment. Thus, there exists a mapping from user's position to CSI, i.e., 
\begin{equation}
g:\{ {x_u}\}  \to \{ {{\bf{H}}_u}\}.
\end{equation}
It should be noted that this mapping is actually bidirectional with high probability in practical wireless communication systems \cite{8292280}. Thus, there also exists a mapping from CSI to user's position, i.e., 
\begin{equation}
{g^{ - 1}}:\{ {{\bf{H}}_u}\}  \to \{ {x_u}\}.
\end{equation}
Therefore, we can pbtain 
\begin{equation}
{{{\bf{H}}(t + \Delta t) - {\bf{H}}(t)} \over {\Delta t}} = {{g\{ {g^{ - 1}}[{\bf{H}}(t)] + \Delta t \times {\bf{v}}\}  - {\bf{H}}(t)} \over {\Delta t}},
\end{equation}
where ${\bf{H}}(t)$ denotes the channel at the current time and ${\bf{v}}$ is the user's velocity vector which is independent of $t$. Therefore, this theorem is proved.
\end{proof}

Moreover, Theorem 1 can also be proved in the following way. 
\begin{proof}[\textbf{Proof 2:}]
According to equation (1),  the real and imaginary parts of channel response can be respectively wreitten as
\begin{equation}
\begin{aligned}
&{{\bf{h}}^{real}}[l] = \sum\limits_{p = 1}^K {{\bf{h}}_p^{real}}[l] \\&=  \sum\limits_{p = 1}^K {{\alpha _p}e({\theta _p})} \cos (2\pi [{{{d_p} + {v_u}\cos ({\theta _v} - {\theta _p}){\tau _p}} \over {{\lambda _l}}}]),
\end{aligned}
\end{equation}
and 
\begin{equation}
\begin{aligned}
	&{{\bf{h}}^{imag}}[l] = \sum\limits_{p = 1}^K {{\bf{h}}_p^{real}}[l] \\&=  \sum\limits_{p = 1}^K {{\alpha _p}e({\theta _p})} \sin (2\pi [{{{d_p} + {v_u}\cos ({\theta _v} - {\theta _p}){\tau _p}} \over {{\lambda _l}}}]),
\end{aligned}
\end{equation}
where ${{{\bf{h}}_p}}$ is the $p$th path response component. In particular, ${\alpha _p}$ is an inverse proportional function about ${d_p}$. Thus, we denote ${\alpha _p} = {{{\xi _p}} \over {{d_p}}}$, where ${\xi _p}$ is only related to the characteristics of electromagnetic materials. Also, we have ${\tau _p} = {{{d_p}} \over c}$, where $c$ is the speed of light. Thus, we have $2\pi [{{{d_p} + {v_u}\cos ({\theta _v} - {\theta _p}){\tau _p}} \over {{\lambda _l}}}] = {\rho_p} {d_p}$. Taking the real part of channel response as an example, its derivative with respect to time is
\begin{equation}
\begin{aligned}
&{{\partial {{\bf{h}}^{real}}[l]} \over {\partial t}} = {{\partial {{\bf{h}}^{real}}[l]} \over {\partial {d_p}}}{{\partial {d_p}} \over {\partial t}}\\& = [\sum\limits_{p = 1}^K {{{{\xi _p}} \over {{d_p}}}{\bf{e}}({\theta _p})} \sin ({\rho _p}{d_p}){\rho _p} - \sum\limits_{p = 1}^K {{{{\xi _p}} \over {{d_p}}}{\bf{e}}({\theta _p})} \cos ({\rho _p}{d_p}){1 \over {{d_p}}}]{{\partial {d_p}} \over {\partial t}} \\& = \sum\limits_{p = 1}^K {({\rho _p}} {\bf{h}}_p^{imag}[l] - {1 \over {{d_p}}}{\bf{h}}_p^{real}[l]){{\partial {d_p}} \over {\partial t}}.
\end{aligned}
\end{equation}
Similarly, the derivative of the imaginary part can be given by 
\begin{equation}
{{\partial {{\bf{h}}^{imag}}[l]} \over {\partial t}} =  - \sum\limits_{p = 1}^K {({\rho _p}{\bf{h}}_p^{real}[l] + {1 \over {{d_p}}}{\bf{h}}_p^{imag}[l])} {{\partial {d_p}} \over {\partial t}}.
\end{equation}
Considering that the velocity vector hardly changes during such short coherent time slot, 
${{\partial {d_p}} \over {\partial t}}$ is a function decided by the user's position at time $t$, which is also decided by the current CSI according to what we analysed in Proof 1. Thus, the whole formula only has  one variable ${\bf{h}}[l]$. Therefore, this theorem is proved.
\end{proof}

Consequently, combined with the previous analysis, the physics process of channel changing is quite suitable to be characterized by Neural ODE. Thus, a high-dimensional nonsmooth prediction problem is transformed into a continuous physics driven forward calculation process.

\subsection{Learning Structure of ODE-RNN}
In addition to the inherent advantages of ODE-RNN over Neural ODE network mentioned in the previous subsection, there exists another major consideration in applying ODE-RNN to mobile channel prediction task. In the practical scenario, it is inevitable that some component paths maybe untable and scattering environment can be slightly disturbed. In this case, there exists some errors in the values of some observation points which may cause obvious error accumulation when using Neual ODE structure. At each observation point, ODE-RNN will use the current observation value and the hidden state value propagated forward to calculate the current hidden state value. This learning structure greatly enhances the adaptability and robustness of the system in practical scenarios.

The learning structure of ODE-RNN is shown in Fig. \ref{RNN}. The real part and imaginary part of the complex valued channel matrix are spliced into a real matrix. RNN adopts LSTM mechanism to increase its learning ability. It is worth mentioning that this network structure can adapt to both equal interval sampling and unequal interval sampling cases. When unequal interval sampling scheme is adopted, the sampling time interval needs to be used as the input of Neural ODE to adjust the integration length of ODE solver in forward calculation.

\begin{figure}[b]
	\centering
	\includegraphics[width=0.4\textwidth]{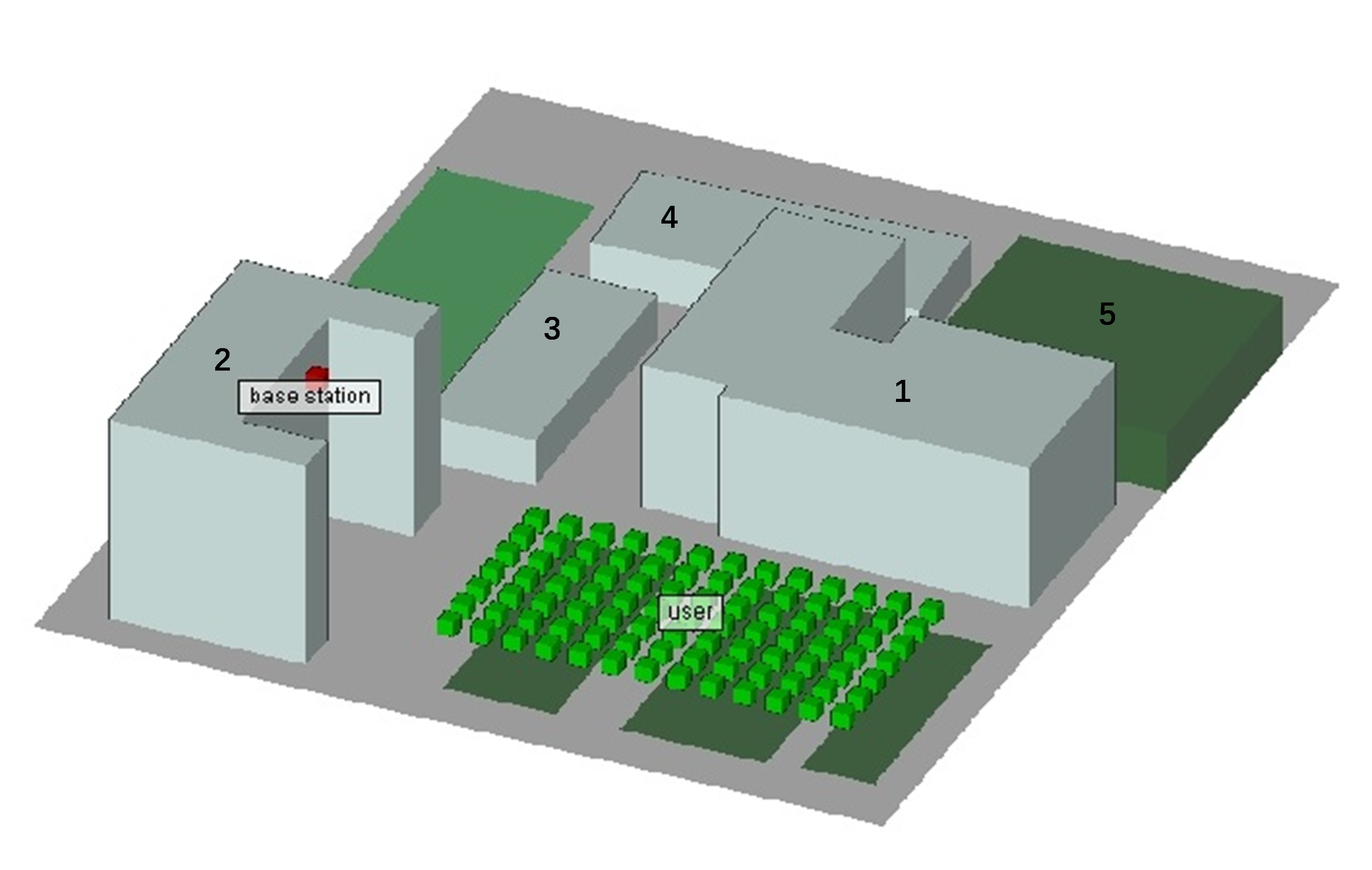}
		\vspace{-.5em}
	\caption{The 3D model of the ray-tracing scene.}
	\label{model}
		\vspace{-.5em}
\end{figure}

\section{Performance Evaluation} 
\subsection{Scene Setup and Datasets Generation} \label{scene}
As shown in Fig. \ref{model}, we choose a practical outdoor scenario to set up our 3D model. Wireless Insight from Remcom company is used to do the ray-tracing calculation. In Fig. \ref{model}, the height of Building 1 is 25m, the height of Building 2 is 35m, the height of building 3 and building 4 is 8m and the building material is cement. Area 5 is a forest. Users are distributed in a 120m × 60m area. The central frequency is set to 3.5GHz. The BS is located 10m higher above building 2 and is equipped with a ULA. The OFDM bandwidth is 100MHz and the maximum number of paths is 25.
Considering that the channel sampling time interval is small in practical system, it can be assumed that the user moves in a uniform linear way in any direction within the length of sequence time. The corresponding Doppler phase shift is calculated and applied to each propagation path. The channel CSI matrix is calculated by ray-tracing algorithm.

\begin{table}[b!]
	\caption{Main Parameters and Values for LSTM network.}
	\begin{center}
		\begin{tabular}{ p{4cm}   p{4cm}}
			\toprule
			\textbf{Parameters} & \textbf{Value} \\
			\toprule
			Input dimension & 64×64×2\\
			
			Output dimension & 64×64×2 \\
			
			Activation function & Tanh \\
			
			Number of neurons in hidden layer & 384 \\			
			
			Performance metric & Mean square error (MSE) \\
			
			Optimizer & Adam \\
			
			Training steps & $2 \times {10^5}$ \\
			
			Learning rate & $1 \times {10^{ - 3}}$ \\
			
			Batch size & 20 \\
			
			Training samples & $80\% $ of the whole datasets \\
			\toprule
		\end{tabular}		
	\end{center}
	\label{LSTM_setting}
\end{table}

\begin{table}[b!]
	\caption{Main Parameters and Values for Neural ODE}
	\begin{center}
		\begin{tabular}{ p{4cm}   p{4cm}}
			\toprule
			\textbf{Parameters} & \textbf{Value} \\
			\toprule
			Network type & MLP\\
			
			Iutput dimension & 384 \\
			
			Activation function & Tanh   \\
			
			Number of neurons in hidden layer & 512-1024-512 \\			
			
			ODE Solver & Adaptive Solver \\
			
			Backpropagation & Adaptive Checkpoint\\
			& Adjoint Method \\
			\toprule
		\end{tabular}		
	\end{center}
	\label{MLP_setting}
\end{table}

\subsection{Experimental Results} \label{performance}
To evaluate our proposed approach comprehensively, we compare our proposed method directly with existing works. The comparison includes the prediction accuracy under different user’s speeds and different sequence lengths. Moreover, in order to show the adaptability of the proposed method to the real environment, the comparison also include the robustness of the networks. Firstly, to verify the effect of using Neural ODE network, the experiment will compare our proposed method with sequential learning structure. In order to ensure the fairness of the comparison, the number of parameters of the two networks will be approximately consistent with our network. Secondly, in order to verify the necessity of integrating Neural ODE with sequence learning structure, the prediction performance of normal Neural ODE and ODE-RNN network will be compared. In addition, the prediction performance of the network with channel noise is considered to verify the robustness of the proposed network. For the reason that the proposed method does not require equal sampling interval, when the channel quality is poor, ODE-RNN can discard part of the observation results while still work. We will also evaluate the performance of this scheme.

The parameter settings for ODE-RNN are shown in Tables \ref{LSTM_setting} and \ref{MLP_setting}. TensorFlow library \cite{Mart2016TensorFlow} is used to train the networks. Mean square error (MSE) is used as the loss function and MSE can be mathematically written as
\begin{equation}
\text{MSE} = {1 \over i}{\sum\limits_{m = 1}^i {({y_m} - \mathop {{y_m}}\limits^ \wedge  )} ^2},
\end{equation}
where $i$ is the dimension of output, $y$ is the training label and ${\mathop y\limits^ \wedge}$ is the prediction value. It represents the average distance between the target value and prediction value of all the output dimensions.

Normalized MSE (NMSE) is used to evaluate the prediction accuracy of the testing datasets, which can be written as
\begin{equation}
\text{NMSE} = \mathbb{E}\left({{\sum\limits_{m = 1}^i {|{y_m} - {{\mathop {{y_m}|}\limits^ \wedge  }^2}} } \over {\sum\limits_{m = 1}^i {|{y_m}{|^2}} }}\right).
\end{equation}
NMSE is an expectation value calculated across the testing dataset. Compared with MSE, NMSE is more convenient for comparison crossing different datasets. 

\begin{figure}
	\centering
	\includegraphics[width=0.45\textwidth]{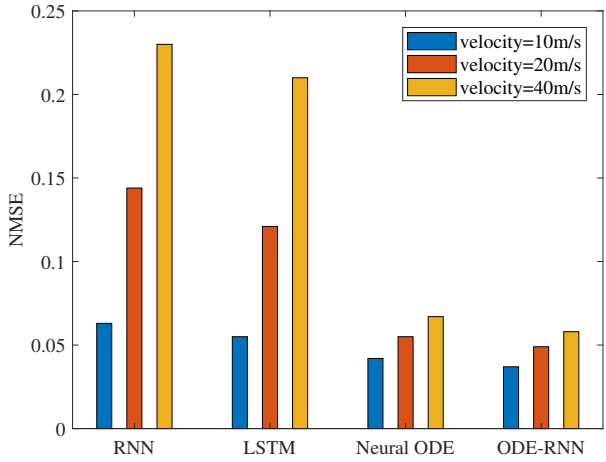}
		\vspace{-.5em}
	\caption{The prediction NMSE comparison among different methods with various user velocity.}
		\vspace{-.5em}
	\label{ex1}
\end{figure}
\begin{figure}
	\centering
	\includegraphics[width=0.45\textwidth]{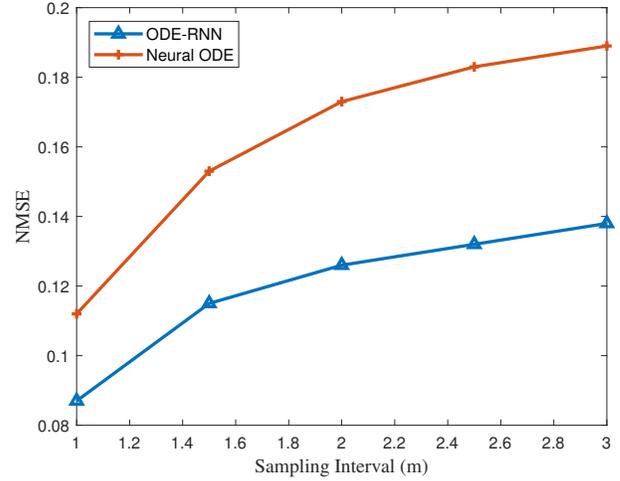}
		\vspace{-.5em}
	\caption{The prediction NMSE comparison between ODE-RNN and Neural ODE when the sampling intervals are much higher.}
		\vspace{-.5em}
	\label{ex2}
\end{figure}
\begin{figure}
	\centering
	\includegraphics[width=0.45\textwidth]{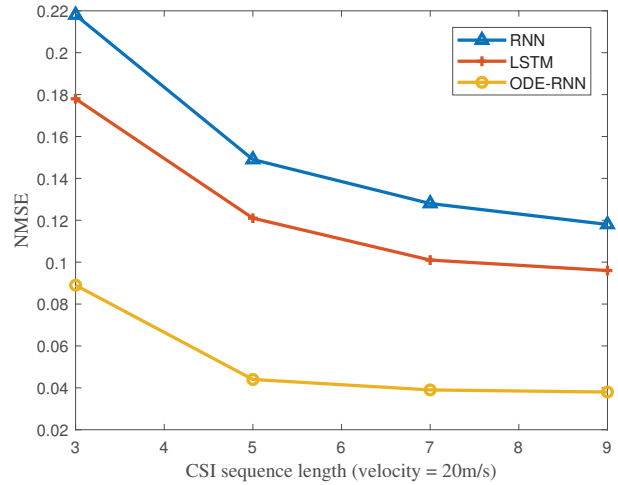}
		\vspace{-.5em}
	\caption{The prediction NMSE comparison between different methods under different CSI sequence length.}
	\label{ex3}
		\vspace{-.5em}
\end{figure}
\begin{figure}
	\centering
	\includegraphics[width=0.45\textwidth]{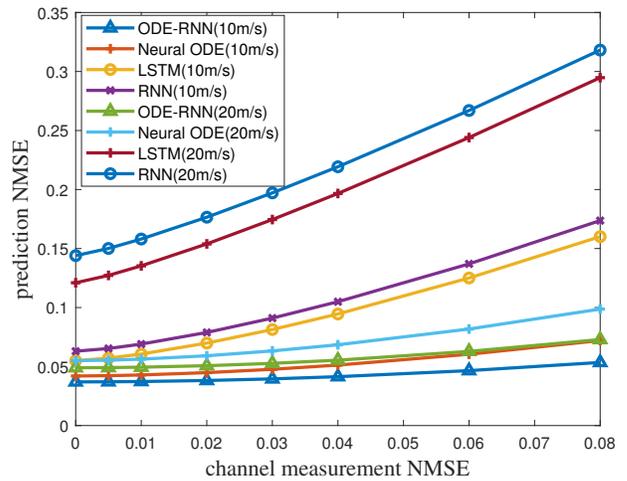}
	\vspace{-.5em}
	\caption{The prediction NMSE under different channel measurement error with different method and user’s velocity.}
		\vspace{-.5em}
	\label{ex4}
\end{figure}
The prediction NMSE results compare between different methods and different user's velocity is shown in Fig. \ref{ex1}. For fair comparison, all sequence lengths are set to 5. It can be seen from the results that the overall performance of the methods adopting Neural ODE network is better than the existing methods. More importantly, the performance of existing works based on RNN network structure is very sensitive to the change of user's speed. The main reason is that the performance of RNN is extremely dependent on the correlation between adjacent observations. When the user moves faster, the data correlation between adjacent sampling points obviously decreases. Therefore, the performance of RNN and LSTM decreases significantly. However, the network based on ODE structure has obvious performance advantages for faster speed because it hardly relies on the data correlation between observations.

In order to better illustrate the advantages of using ODE-RNN structure over ordinary Neural ODE, Fig. \ref{ex2} compares the prediction performance of the two approaches with greater spatial spacing between adjacent sampling points. It is worth noting that under such sampling interval, RNN and LSTM have been difficult to converge and apply due to the poor correlation between adjacent data. So, the performance of RNN and LSTM will not be added to the comparison. For the reason that ODE-RNN can take use of different observations on the final prediction results in the forward calculation and the integration distances between different time points are short, the result shows that ODE-RNN enjoys significant performance advantage over Neural ODE.

Fig. \ref{ex3} shows the prediction accuracy of the different methods under various sequence lengths. It is worth mentioning that for the normal Neural ODE, the forward calculation only depends on its initial input value after the network is trained and parameters are fixed. As a result, Neural ODE is not compared in this figure. According to Fig. \ref{ex3}, it is shown that compared with the existing methods, ODE-RNN can obtain the best performance with less sequence length which enables the system to have a wider range of sequence lengths to choose for adapting to different application cases, making the system more flexible in practice.

In order to show the advantages of the proposed approach in network robustness and adaptiveness, the influence of channel sequence with measurement error in the inferring stage on prediction performance is considered. The experimental result is shown in Fig. \ref{ex4}. Here, assuming that the channel noise obey Gaussian distribution of zero means. The average NMSE of the noisy channel w.r.t the real channel represents the channel quality. Because that Neural ODE does not require equal interval of channel sequences, some high deviation observations can be discarded, while the existing methods must retain those data. In fact, for RNN networks, a small amount of high deviation observation data will greatly destroy the sequence correlation, and then affect the accuracy of prediction. It can be seen that the method proposed by us can simply avoid this kind of problem, and is far more robust and adaptive than the existing methods.

\section{Conclusions} \label{conclusion}
In this paper, a physics inspired adaptive channel prediction method based on ODE-RNN is proposed. Different from the existing work focusing on correlation between sequence data, the network structure proposed in this paper is designed motivated by the continuous physics changing process of channel response. The proposed network can predict the changing of CSI in mobile environment with high accuracy. Numerical results verified that the proposed method can bring a significant improvement in prediction accuracy, system flexibility and network robustness compared with existing methods.

\bibliographystyle{IEEEtran}
\bibliography{bibfile}

\begin{thebibliography}{10}
\providecommand{\url}[1]{#1}
\csname url@samestyle\endcsname
\providecommand{\newblock}{\relax}
\providecommand{\bibinfo}[2]{#2}
\providecommand{\BIBentrySTDinterwordspacing}{\spaceskip=0pt\relax}
\providecommand{\BIBentryALTinterwordstretchfactor}{4}
\providecommand{\BIBentryALTinterwordspacing}{\spaceskip=\fontdimen2\font plus
\BIBentryALTinterwordstretchfactor\fontdimen3\font minus
  \fontdimen4\font\relax}
\providecommand{\BIBforeignlanguage}[2]{{%
\expandafter\ifx\csname l@#1\endcsname\relax
\typeout{** WARNING: IEEEtran.bst: No hyphenation pattern has been}%
\typeout{** loaded for the language `#1'. Using the pattern for}%
\typeout{** the default language instead.}%
\else
\language=\csname l@#1\endcsname
\fi
#2}}
\providecommand{\BIBdecl}{\relax}
\BIBdecl

\bibitem{9427230}
C.~Wu, X.~Yi, Y.~Zhu, W.~Wang, L.~You, and X.~Gao, ``Channel prediction in
  high-mobility massive {MIMO}: From spatio-temporal autoregression to deep
  learning,'' \emph{IEEE Journal on Selected Areas in Communications}, vol.~39,
  no.~7, pp. 1915--1930, 2021.

\bibitem{8395053}
C.~Luo, J.~Ji, Q.~Wang, X.~Chen, and P.~Li, ``Channel state information
  prediction for {5G} wireless communications: A deep learning approach,''
  \emph{IEEE Transactions on Network Science and Engineering}, vol.~7, no.~1,
  pp. 227--236, 2020.

\bibitem{6945858}
R.~O. Adeogun, P.~D. Teal, and P.~A. Dmochowski, ``Extrapolation of {MIMO}
  mobile-to-mobile wireless channels using parametric-model-based prediction,''
  \emph{IEEE Transactions on Vehicular Technology}, vol.~64, no.~10, pp.
  4487--4498, 2015.

\bibitem{2000Long}
A.~Duel-Hallen and S.~Hu, ``Long-range prediction of fading signals,''
  \emph{IEEE Signal Process Mag}, vol.~17, no.~3, pp. 62--75, 2000.

\bibitem{2002Recurrent}
J.~T. Connor, R.~D. Martin, and L.~E. Atlas, ``Recurrent neural networks and
  robust time series prediction,'' \emph{IEEE Transactions on Neural Networks},
  vol.~5, no.~2, pp. 240--254, 2002.

\bibitem{2014Fading}
T.~Ding and A.~Hirose, ``Fading channel prediction based on combination of
  complex-valued neural networks and chirp z-transform,'' \emph{IEEE
  Transactions on Neural Networks and Learning Systems}, vol.~25, no.~9, pp.
  1686--1695, 2014.

\bibitem{2020Recurrent}
J.~Wei and H.~D. Schotten, ``Recurrent neural networks with long short-term
  memory for fading channel prediction,'' in \emph{2020 IEEE 91st Vehicular
  Technology Conference (VTC2020-Spring)}, 2020.

\bibitem{8904286}
Y.~Huangfu, J.~Wang, R.~Li, C.~Xu, X.~Wang, H.~Zhang, and J.~Wang, ``Predicting
  the mumble of wireless channel with sequence-to-sequence models,'' in
  \emph{2019 IEEE 30th Annual International Symposium on Personal, Indoor and
  Mobile Radio Communications (PIMRC)}, 2019, pp. 1--7.

\bibitem{2018Optimization}
P.~Stapor, F.~Frhlich, and J.~Hasenauer, ``Optimization and uncertainty
  analysis of ode models using 2nd order adjoint sensitivity analysis,'' 2018.

\bibitem{2020Adaptive}
J.~Zhuang, N.~Dvornek, X.~Li, S.~Tatikonda, X.~Papademetris, and J.~Duncan,
  ``Adaptive checkpoint adjoint method for gradient estimation in neural ode,''
  \emph{arXiv e-prints}, 2020.

\bibitem{9569281}
Z.~Xiao, Z.~Zhang, C.~Huang, C.~Zhong, and X.~Chen, ``Gpae-lstmnet: A novel
  learning structure for mobile {MIMO} channel prediction,'' in \emph{2021 IEEE
  32nd Annual International Symposium on Personal, Indoor and Mobile Radio
  Communications (PIMRC)}, 2021, pp. 1--6.

\bibitem{8292280}
J.~Vieira, E.~Leitinger, M.~Sarajlic, X.~Li, and F.~Tufvesson, ``Deep
  convolutional neural networks for massive {MIMO} fingerprint-based
  positioning,'' in \emph{2017 IEEE 28th Annual International Symposium on
  Personal, Indoor, and Mobile Radio Communications (PIMRC)}, 2017, pp. 1--6.

\bibitem{Mart2016TensorFlow}
M.~Abadi, P.~Barham, J.~Chen, Z.~Chen, and X.~Zhang, ``Tensorflow: A system for
  large-scale machine learning,'' 2016.

\end{thebibliography}
\end{document}